\newcounter{comments}
\definecolor{Red}{rgb}{0.8,0,0}
\definecolor{Green}{rgb}{0,0.7,0}
\algrenewcommand\algorithmicforall{\textbf{foreach}}
\algrenewcommand\algorithmicindent{.8em}
\newtheorem{theorem}{Theorem}[section]   
\newtheorem{corollary}[theorem]{Corollary}     
\newtheorem{proposition}[theorem]{Proposition}  
\theoremstyle{remark}
\newtheorem{definition}[theorem]{Definition}   
\newtheorem{remark}[theorem]{Remark}        
\newtheorem{example}[theorem]{Example}        
\newtheorem{proof}[theorem]{Proof}        
\begin{document}

\begin{frontmatter}

\title{Revealing the canalizing structure of Boolean functions: Algorithms and applications}

\thanks[footnoteinfo]{Corresponding author David Murrugarra.}

\author[Elena]{Elena Dimitrova}\ead{edimitro@calpoly.edu},    
\author[Brandy]{Brandilyn Stigler}\ead{bstigler@smu.edu},      
\author[Claus]{Claus Kadelka}\ead{ckadelka@iastate.edu},  
\author[David]{David Murrugarra}\ead{murrugarra@uky.edu}

\address[Elena]{Department of Mathematics, California Polytechnic State University, San Luis Obispo, CA, USA}  
\address[Brandy]{Department of Mathematics, Southern Methodist University, Dallas, TX, USA}             
\address[Claus]{Department of Mathematics, Iowa State University, Ames, IA, USA}        
\address[David]{Department of Mathematics, University of Kentucky, Lexington, KY, USA}
          
\begin{keyword}                           
Boolean functions, Canalizing layers, \textbf{NP}-hard, Disjunctive normal forms, Reverse engineering.               
\end{keyword}                             

\begin{abstract}                          
Boolean functions can be represented in many ways including logical forms, truth tables, and polynomials.
Additionally, Boolean functions have different canonical representations such as minimal disjunctive normal forms.
Another canonical representation is based on the polynomial representation of Boolean functions and the biologically motivated concept of canalization: any Boolean function 
can be written as a nested product of canalizing layers and a polynomial that contains the variables that are never canalizing.
In this paper we study the problem of identifying the canalizing layers of a Boolean function.
First, we show that the problem of finding the canalizing layers is \textbf{NP}-hard.
Second, we present several algorithms for finding the canalizing layers, discuss their complexities, and compare their performances.
Third, we exhibit how the canalizing layers format can
be used to find a disjunctive normal form for any nested canalizing function.
Another application deals with the reverse engineering of Boolean networks with a prescribed layering format. Implementations of the developed algorithms in Python and in the computer algebra system Macaulay2 are available at \href{https://github.com/ckadelka/BooleanCanalization}{https://github.com/ckadelka/BooleanCanalization}.
\end{abstract}

\end{frontmatter}
\normalem

\section{Introduction}
First introduced in the 1940s, canalization refers to a concept from evolutionary and developmental biology used to describe the ability of organisms to produce and maintain the same phenotype in the face of widespread variability in their genotypes~\cite{waddington1942canalization,Siegal:2002wz}. S. Kauffman introduced Boolean canalizing functions and studied their role in 
gene regulatory network models~\cite{kauffman1974large}. 
Over the years, more  evidence emerged that canalization is ubiquitous in gene regulation~\cite{harris2002model,daniels2018criticality}: most rules used in over 130 expert-curated published gene regulatory network models are even so-called nested canalizing~\cite{kadelka2020meta}. These findings intensified the study of Boolean canalization and its impact on the dynamics and controllability of gene regulatory network models~\cite{jarrah2007nested,Murrugarra:2015uq}. 
Boolean networks governed by canalizing functions are typically more stable than networks governed by random functions~\cite{kauffman2004genetic,kadelka2017influence}. That is, they have fewer and shorter attractors and are more robust to perturbations. The amount of canalization exhibited by a Boolean function can be quantified using concepts such as the canalizing depth~\cite{Layne:2012aa,he2016stratification}, the canalizing strength~\cite{kadelka2020collectively}, or the logical reduncancy~\cite{marques2013canalization,gates2021effective}. As a rule of thumb, the higher the amount and prevalence of canalization the more stable are the dynamics~\cite{karlsson2007order}.

A canalizing function possesses at least one ``canalizing'' variable such that, if this variable takes on its ``canalizing'' value, then the output value is already determined, regardless of the values of the remaining variables. 
If the canalizing variable does not take on its canalizing value, but there is a second variable with the canalization property, the function is $2$-canalizing. If $k$ variables follow this pattern, the function is $k$-canalizing.  If all variables are canalizing in this iterative way, the function is ``nested canalizing''; the Boolean AND function is such an example. 
Boolean nested canalizing functions have been shown to be equivalent to unate cascade functions~\cite{jarrah2007nested}, a class of functions that have been extensively studied by computer scientists~\cite{Maitra1962,Sasao_1979}. The concept of canalization has also been extended to the multistate case where variables can take on three or more input values and the dynamics of networks governed by multistate canalizing functions have been studied~\cite{Murrugarra:2011aa,Murrugarra:2012aa,kadelka2017multistate,MurrugarraD21}.

The concept of canalizing layers was first introduced for nested canalizing functions~\cite{Li:2013aa}. Later, a full stratification of the variables of any Boolean function into layers was provided~\cite{he2016stratification}. This layer structure provides a unique polynomial form (up to permutations within the layers) of representing any Boolean function and contains various important information about the dynamical properties of the function. 
For instance, the canalizing layers of a nested canalizing function are related to the function's symmetry groups~\cite{rosenkrantz2019symmetry}; all variables in the same layer are contained within at most two symmetry groups. Further, the layer structure has been used for quantifying the total effect of network perturbations and control~\cite{Murrugarra:2015uq,MurrugarraD21}, as well as for calculating exact Derrida values, a stability measure of Boolean networks, for networks based on nested canalizing functions~\cite{kadelka2017influence}.


In this paper we present multiple algorithms for finding the canalizing layers of a Boolean function. We also provide implementations of the algorithms in Python and in the computer algebra system Macaulay2~\cite{M2}. Although our implementations can handle finding the layer structure for most functions in published models~\cite{kadelka2020meta}, the problem is \textbf{NP}-hard, as we prove in Section~\ref{sec:np-hard}.

The canalizing layer structure can be used to obtain useful information about the Boolean functions in a model. In this paper, we focus on two practical applications.
First, we show how to obtain a disjunctive normal form for a nested canalizing function from the  layers. Second, we provide an application for how to reverse engineer a family of Boolean functions with a prescribed layering format.

This paper is structured in the following way. We provide the theoretical background of Boolean canalization and the layer structure in Section~\ref{sec:background} and show that the problem of finding the canalizing layers is \textbf{NP}-hard in Section~\ref{sec:np-hard}.
Then in Section~\ref{sec:algorithms} we present multiple algorithms along with their performance and conclude with applications. We describe how to obtain disjunctive normal forms from canalization layers of Boolean functions in Section~\ref{sec:DNF}, and show how to use layers for reverse engineering in Section~\ref{sec:rev-eng}. In Section~\ref{sec:discussions} we briefly summarize the main findings and their significance.

\section{Background}
\label{sec:background}
In this section, we review some concepts and definitions and introduce the concept of \emph{canalization} as well as \emph{layers of canalization}. Throughout the manuscript, we consider functions over a two-element field $\mathbb{F}_2 = \{0,1\}$ (i.e., Boolean functions). We frequently use the fact that any Boolean function $f \colon \mathbb{F}_2 \times \cdots \times \mathbb{F}_2 \to \mathbb{F}_2$ can be represented by a polynomial function with coefficients in $\mathbb{F}_2$, which enables the use of tools from algebraic geometry.

\begin{definition}\label{def:essential}
Given a Boolean function $f(x_1,\ldots,x_n)$, variable $x_i$ is \emph{essential} if the two functions $$f(x_1,\ldots,x_{i-1},0,x_{i+1},\ldots,x_n)$$ and $$f(x_1,\ldots,x_{i-1},1,x_{i+1},\ldots,x_n)$$ are not equivalent. Otherwise, $x_i$ is \emph{non-essential}.
\end{definition}

\begin{definition}\label{def:canalizing}
A Boolean function $f(x_1,\ldots,x_n)$ is \emph{canalizing} if there exists a variable $x_i$, values $a,\,b\in\{0,\,1\}$, and a Boolean function $g(x_1,\ldots,x_{i-1},x_{i+1},\ldots,x_n)\not\equiv b$  such that
$$f(x_1,\ldots,x_n)= \begin{cases}
b, \text{if}\ x_i=a\\
g(x_1,\ldots,x_{i-1},x_{i+1},...,x_n), \text{if}\ x_i\neq a.
\end{cases}$$
In this case, $x_i$ is called a \emph{canalizing variable}, $a$ its \emph{canalizing input}, and $b$ the \emph{canalized output}.

\end{definition}

\begin{definition}[\cite{he2016stratification}]\label{def:kcanalizing} 
A Boolean function $f(x_1,\ldots,x_n)$ is \emph{$k$-canalizing}, where $1 \leq k \leq n$, with respect to the permutation $\sigma \in \mathcal{S}_n$, inputs $a_1,\ldots,a_k$, and outputs $b_1,\ldots,b_k$ if

$f(x_{1},\ldots,x_{n})=$
\begin{equation*}
\left\{\begin{array}[c]{ll}
b_{1} & x_{\sigma(1)} = a_1,\\
b_{2} & x_{\sigma(1)} \neq a_1, x_{\sigma(2)} = a_2,\\
b_{3} & x_{\sigma(1)} \neq a_1, x_{\sigma(2)} \neq a_2, x_{\sigma(3)} = a_3,\\
\vdots  & \vdots\\
b_{k} & x_{\sigma(1)} \neq a_1,\ldots,x_{\sigma(k-1)}\neq a_{k-1}, x_{\sigma(k)} = a_k,\\
f_C\not\equiv b_k & x_{\sigma(1)} \neq a_1,\ldots,x_{\sigma(k-1)}\neq a_{k-1}, x_{\sigma(k)} \neq a_k,
\end{array}\right.\end{equation*}
where $f_C = f_C(x_{\sigma(k+1)},\ldots,x_{\sigma(n)})$ is the \emph{core function}, a Boolean function on $n-k$ variables. When $f_C$ is not canalizing, then the integer $k$ is the \emph{canalizing depth} of $f$ \cite{Layne:2012aa}. 
\end{definition}

An $n$-canalizing function is also called a \emph{nested canalizing function} (NCF), and by definition all Boolean functions are $0$-canalizing.

Authors He and Macauley provided the following powerful stratification theorem. 

\begin{theorem}[\cite{he2016stratification}]\label{thm:he}
Every Boolean function $f(x_1,\ldots,x_n)\not\equiv 0$ can be uniquely written as 

$    f(x_1,\ldots,x_n) =$
\begin{equation}\label{eq:matts_theorem}
    M_1(M_2(\cdots (M_{r-1}(M_rp_C + 1) + 1)\cdots)+ 1)+ q,
\end{equation}

where each $M_i = \displaystyle\prod_{j=1}^{k_i} (x_{i_j} + a_{i_j})$ is a nonconstant extended monomial, $p_C$ is the \emph{core polynomial} of $f$, and $k = \displaystyle\sum_{i=1}^r k_i$ is the canalizing depth. Each $x_i$ appears in exactly one of $\{M_1,\ldots,M_r,p_C\}$, and the only restrictions are the following ``exceptional cases'':
\begin{enumerate}
    \item If $p_C\equiv 1$ and $r\neq 1$, then $k_r\geq 2$;
    \item If $p_C\equiv 1$ and $r = 1$ and $k_1=1$, then $q=0$.
\end{enumerate}
When $f$ is not canalizing (\textit{i.e.}, when $k=0$), we simply have $p_C = f$.
\end{theorem}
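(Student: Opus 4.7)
The plan is to prove both existence and uniqueness by induction on the number of variables $n$, peeling off one canalizing layer per step. For existence, I would first handle the trivial case of noncanalizing $f$ by taking $r = 0$ and $p_C = f$. Otherwise, the key observation is that any two top-level canalizing variables $x_i, x_j$ must share a common canalizing output: if $x_i = a_i$ forces $f = b$ and $x_j = a_j$ forces $f = b'$, then evaluating $f$ at any input with both $x_i = a_i$ and $x_j = a_j$ forces $b = b'$. Hence a well-defined common output $b_1$ and a well-defined set $L_1$ of all top-level canalizing variables (with their canalizing inputs $a_j$) exist. Setting $M_1 = \prod_{j \in L_1}(x_j + a_j)$, a direct $\mathbb{F}_2$ split of the input space by $M_1 = 0$ versus $M_1 = 1$ yields
\[
f = M_1(\tilde g + b_1) + b_1,
\]
where $\tilde g$ is the restriction of $f$ to the non-canalizing branch and depends on at most $n - |L_1|$ variables. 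Applying the inductive hypothesis to $\tilde g + b_1$ produces its layered form; by maximality of $L_1$ the first canalizing output of $\tilde g$ must be $1 + b_1$, so $\tilde g + b_1$ has outer constant $1$. Substituting back gives exactly the nested form in the theorem with $b = b_1$, the alternating $+1$'s emerging automatically from the alternation of consecutive layer outputs.

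For uniqueness, in any valid representation the first layer $M_1$ (including its canalizing inputs $a_j$), the outer constant $b$, and the common canalizing output $b_1$ are forced by the canalizing structure of $f$ itself, and the remaining data is then determined inductively from the decomposition of $\tilde g + b_1$. The two exceptional cases correspond precisely to degenerate configurations in which distinct representations could otherwise coexist: when $p_C \equiv 1$ and $r \neq 1$, the factor $M_r p_C + 1 = M_r + 1$ collapses, and if in addition $k_r = 1$ the single affine term $x_j + a_j + 1$ could be absorbed into the preceding layer, breaking uniqueness, so the condition $k_r \geq 2$ is imposed. Similarly, when $r = 1$, $k_1 = 1$, and $p_C \equiv 1$, the expression $(x_j + a_j) + b$ is unchanged by toggling $a_j$ and $b$ simultaneously, so requiring $b = 0$ pins down the canonical representative.

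The main obstacle I anticipate is the careful bookkeeping in the inductive step: verifying that the layered form of $\tilde g + b_1$ threads correctly through the outer factor $M_1$ to reproduce the precise nested form of the theorem, with the alternating $+1$'s arising from the fact that consecutive layers have opposite canalizing outputs (itself a consequence of the maximality of each $L_i$), and verifying that the two exceptional conditions are both necessary and sufficient for uniqueness. The requirement that each variable $x_i$ appears in exactly one of $\{M_1, \ldots, M_r, p_C\}$ follows automatically because the induction only processes variables not already assigned to $L_1$, but should be stated explicitly as part of the invariant carried through the induction.
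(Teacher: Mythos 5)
The paper does not prove this theorem: it is quoted verbatim from the cited reference \cite{he2016stratification}, so there is no in-paper proof to compare against. Your proposal is, in substance, the standard layer-peeling induction by which this stratification result is established in that reference (and your ``common canalized output'' observation is exactly the paper's Remark~\ref{rem:layer_output}, also stated there without proof). The skeleton is sound: the split $f = M_1(\tilde g + b_1) + b_1$ is the right identity over $\mathbb{F}_2$, and your maximality argument correctly forces the canalized output of $\tilde g$ to be $b_1+1$, which is precisely where the alternating $+1$'s come from. Your reading of the two exceptional cases (absorption of a size-one last layer into the previous one when $p_C\equiv 1$, and the toggle symmetry $(x+a)+b = (x+a+1)+(b+1)$) is also correct.

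One genuine subtlety you gloss over: your lemma that all top-level canalizing variables share a canalizing output is proved by evaluating at an input with $x_i=a_i$ and $x_j=a_j$ simultaneously, which requires $i\neq j$. A single variable can be canalizing with \emph{both} inputs (to opposite outputs), and then no such input exists; this happens exactly when $f = x_i + c$ up to inessential variables, i.e., the situation governed by exceptional case~2. You need to quarantine that case before defining $L_1$ and $b_1$, not only in the uniqueness discussion. Relatedly, your uniqueness argument asserts that $M_1$, $b$, and $b_1$ ``are forced by the canalizing structure of $f$ itself''; that is true but is the part that actually needs proof --- namely that in \emph{any} representation of the given form the variables of $M_1$ are precisely the canalizing variables of $f$ with precisely those inputs. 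This follows from the same evaluation argument, but it should be carried out rather than asserted. With those two points filled in, the proof goes through.
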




\begin{remark}~\label{rem:layer_output} Note the following properties of canalization.

(a) Theorem~\ref{thm:he} shows that any Boolean function has a unique extended monomial form, in which the variables are partitioned into different layers based on their dominance. Any variable that is canalizing (independent of the values of other variables) is in the first layer. Any variable that ``becomes'' canalizing when excluding all variables from the first layer is in the second layer, etc. Variables in any layer will be referred to as \emph{conditionally canalizing}. All remaining variables that never become canalizing are part of the core polynomial. The number of variables that eventually become canalizing, \textit{i.e.}, the number of conditionally canalizing variables, is the canalizing depth of the function. NCFs are exactly those functions where all variables are conditionally canalizing.

(b) While variables in the same layer may have different canalizing input values, they all share the same canalized output value, \textit{i.e.}, they all canalize a function to the same output. On the other hand, the outputs of two consecutive layers are distinct. Therefore, the number of layers of a $k$-canalizing function expressed as in Definition~\ref{def:kcanalizing} is simply one plus the number of changes in the vector of canalized outputs, $(b_1,b_2,\ldots,b_k)$.

\end{remark}



\begin{example}
\label{ex:theoretical}
The function $$f=(x_1 + 1)x_2\left[(x_3 + 1)x_4(x_5x_6 + x_7 + 1) + 1\right]$$ has the unique standard extended monomial form $f = M_1(M_2 p_C+ 1)$, where $M_1 = (x_1 + 1)x_2$, $M_2 = (x_3 + 1)x_4$, and $p_C=x_5x_6 + x_7 + 1$.  The variables $x_1$ and $x_2$ are canalizing and are in the first (outermost) layer, while $x_3$ and $x_4$ are conditionally canalizing and are in the second (and consequently last) layer.  In brief, $f$ has~2 layers and canalizing depth~4.    
\end{example}


We modify Equation~\ref{eq:matts_theorem} to present the \emph{layer structure}\footnote{In \cite{kadelka2017influence}, the authors referred to the summary vector $(k_1,\ldots,k_r)$, where $r$ is the number of layers and $k_i$ is the number of variables in the $i$-th layer, as the \emph{layer structure} of $f$.} of a Boolean function; that is complete information about canalizing and conditionally canalizing variables, their inputs and outputs, and the core function.

\begin{definition}
\label{def:layer_structure}
Let $f(x_1,\ldots,x_n)$ be a Boolean function with $r$ layers.   
The \emph{layer structure} of~$f$ is an ordered pair (\texttt{layers}, \texttt{core function}) where
\begin{itemize}
    \item \texttt{layers} is an ordered set of the layers (as a list) and corresponding canalized outputs (as an integer) of the form  
    \begin{center}\{(layer 1, output 1)$,\ldots,$ (layer $r$, output $r$)\}, \end{center}
    where each layer $i$, $1\leq i\leq r,$ is an unordered set of conditionally canalizing variables and their canalizing inputs of the form 
    \begin{center} \{(variable 1, input 1)$,\ldots,$ (variable $k_i$, input $k_i$)\}. 
    \end{center}
    \item \texttt{core function} is the function $f_C$ from Definition~\ref{def:kcanalizing}. The core function and core polynomial from Theorem \ref{thm:he} are intrinsically related, $f_C=p_C$ or $f_C=p_C+1$~\cite{he2016stratification}.
\end{itemize}
\end{definition}

\begin{example}
Consider the function $f$ in Example \ref{ex:theoretical}.  The layer structure of $f$ is 
%
%
\begin{align*}\Bigg( \bigg\{ \Big( \big\{ (x_1,1), (x_2,0) \big\} , 0 \Big),& \Big( \big\{ (x_3,1), (x_4,0) \big\} , 1 \Big) \bigg\}\ ,\\
&x_5x_6 + x_7 + 1 \Bigg).\end{align*}


\end{example}

\begin{remark}
\label{rem:unique}
Any nonzero Boolean function has a unique layer structure by Theorem~\ref{thm:he}.  Note that functions with a single essential variable will have two canalizing input and canalized output values, yet they still have a unique layer structure, forced by ``exceptional case''~2 in Theorem~\ref{thm:he}.  For example, the function $x_1$ has canalizing input value $0$ with corresponding canalized output value $0$ but also canalizing input value $1$ with corresponding canalized output value $1$.  However, its unique layer structure is $$\Bigg( \bigg\{ \Big( \big\{ (x_1,1) \big\} , 1 \Big) \bigg\}\ ,\ 1 \Bigg).$$  Likewise the unique layer structure for $f=x_1+1$ is  $$\Bigg( \bigg\{ \Big( \big\{ (x_1,0) \big\} , 1 \Big) \bigg\}\ ,\ 1 \Bigg)$$ since $f$ is written as $((x_1+a_1)\cdot p_C+1)+q$. Having one layer requires $p_C=1$.  By ``exceptional case''~2, we have $q=0$.  This now forces a canalizing input $a_1=0$ with corresponding canalized output of $1$.
\end{remark}



\section{Enumeration of Boolean functions with a given number of layers}
\label{sec:enumeration}
In this section, we count the number of Boolean functions on $n$ variables that contain a certain number of layers. This extends the results in~\cite[Section 5]{he2016stratification}. We start with a remark that summarizes the known results, which are all direct consequences of Theorem~\ref{thm:he}.

\begin{remark} \label{cor}The following has been proved in~\cite{he2016stratification}.\label{rem:he2016enumeration}

(a) The number $C_n$ of canalizing Boolean functions on $n\geq 0$ variables is 
$$C_n = 2((-1)^n-n-1) + \sum_{k=1}^n(-1)^{k+1}\binom nk 2^{k+1}2^{2^{n-k}}.$$

(b) The number $B^{\star}(n,0)$ of Boolean non-constant core polynomials on $n\geq 0$ variables (i.e., the number of Boolean non-constant, non-canalizing functions) is
$$B^{\star}(n,0) = 2^{2^n} - C_n - 2.$$

(c) The number $B(n,n)$ of Boolean nested canalizing functions on $n\geq 2$ variables is
$$B(n,n) = 2^{n+1} \sum_{r=1}^{n-1} \sum_{\substack{k_1+\cdots k_r = n\\ k_i\geq 1, k_r \geq 2}} \binom{n}{k_1,\ldots,k_r},$$
where $\binom{n}{k_1,\ldots,k_r} = \frac{n!}{k_1!k_2!\cdots k_r!}$. Further, $B(1,1) = 2$.


(d) The number $B(n,k)$ of Boolean functions on $n$ variables with exact canalizing depth $k$, for $1\leq k\leq n$, is
\begin{align*}B(n,k) = \binom nk &\Big[B(k,k) + B^\star(n-k,0)\cdot \\&2^{k+1}\sum_{r=1}^{k} \sum_{\substack{k_1+\cdots k_r = k\\k_i\geq 1}} \binom{n}{k_1,\ldots,k_r} \Big].\end{align*}


\end{remark}

\begin{corollary}\label{cor_enumeration}
For $n\geq 2$ and $1\leq r\leq n-1$, the number $B(n,n,r)$ of Boolean nested canalizing functions on $n$ variables with exactly $r$ layers is
$$B(n,n,r) = 2^{n+1} \sum_{\substack{k_1+\cdots k_r = n\\ k_i\geq 1, k_r \geq 2}} \binom{n}{k_1,\ldots,k_r},$$
and we have $B(n,n) = \sum_{r=1}^{n-1} B(n,n,r)$. Further, $B(1,1,1) = 2$.

The number $B(n,k,r)$ of Boolean functions on $n$ variables with exact canalizing depth $k$ and exactly $r$ layers, for $1\leq r\leq k\leq n$, is
\begin{align*}B(n,k,r) &= \binom nk \Big[B(k,k,r) + B^\star(n-k,0)\cdot\\
&\qquad 2^{k+1}\sum_{\substack{k_1+\cdots k_r = k\\k_i\geq 1}} \binom{n}{k_1,\ldots,k_r} \Big],\end{align*}
and we have $B(n,k) = \sum_{r=1}^{k} B(n,k,r)$. Note that if $n=k$ (i.e., for nested canalizing functions), we have $B^{\star}(n-k,0) = B^{\star}(0,0) = 0$ so that the provided formula does not contradict the $k_r\geq 2$ requirement.

Finally, the number of Boolean functions with exactly $r$ layers, for $1\leq r\leq n-1$, is $\sum_{k=1}^n B(n,k,r).$
\end{corollary}

\begin{proof}
To count the number of Boolean functions on $n$ variables that contain a fixed number of layers, we separate the summations in Remark~\ref{cor} (c) and (d) based on $r$, the number of layers.
\end{proof}

These formulas enable a simple exploration of the prevalence of canalizing functions. As known, the property of canalization, and in particular nested canalization, becomes increasingly rare for functions with more variables (Figure~\ref{new_figure}). Functions with $n-1$ canalizing variables all possess an nth non-essential variable and are, just like functions with $n-2$ canalizing variables, even rarer than nested canalizing functions (Figure~\ref{new_figure}A). When looking at the number of functions with a specific number of canalizing layers (instead of a specific number of canalizing variables), a monotonic relationship appears for $n\geq 3$: there exist fewer functions with $r$ layers than $r-1$ layers, $r=1,\ldots,n-1$.

\begin{figure*}[h]
    \label{new_figure}
    \centering
    \includegraphics[width=\textwidth]{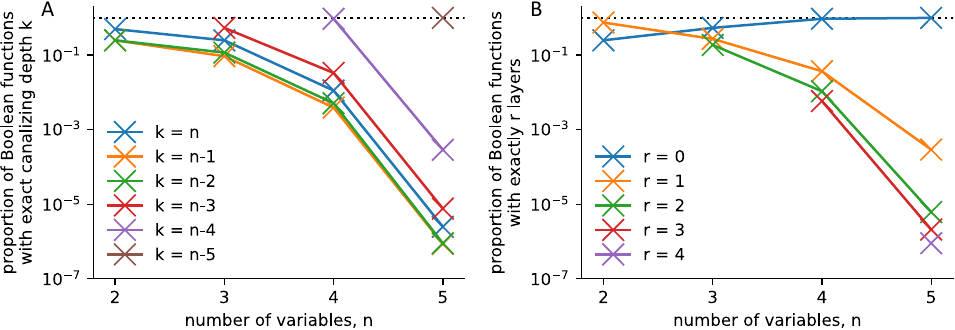}
    \caption{{\bf Enumeration of Boolean functions.} Using the formulas in Remark~\ref{cor} and Corollary~\ref{cor_enumeration}, this figure shows the proportion of Boolean functions with $n=2-5$ variables, which have (A) a specific canalizing depth or (B) a specific number of layers.}
    \label{fig:enumeration}
\end{figure*}

\section{Finding canalizing layers is \textbf{NP}-hard}
\label{sec:np-hard} 

In this section we show that the problem of determining the specific layer structure of a Boolean function is \textbf{NP}-hard, thus requiring efficient algorithms.
To show that the problem of determining the canalizing layers is \textbf{NP}-hard we exploit the connection between
the canalizing layers of a nested canalizing function and the function's symmetry groups given in~\cite{rosenkrantz2019symmetry}, which says that all variables in the same layer are contained within at most two symmetry groups. Here we reproduce the relevant result from~\cite{rosenkrantz2019symmetry} to make our description self-contained.

\begin{theorem}(adapted version of Theorem~2.5 in \cite{rosenkrantz2019symmetry})\label{thm:symmetry_nphard}
Unless $\mathbf{P} = \mathbf{NP}$, there is no polynomial time algorithm for determining the symmetry level of a Boolean function $f$ specified as a Boolean expression, even if $f$ is a CNF expression.
\end{theorem}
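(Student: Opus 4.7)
The statement is flagged as an adaptation of Theorem~2.5 in \cite{rosenkrantz2019symmetry}, so my plan is to inherit their reduction strategy and then verify that the constructed instance can be realized as a CNF formula without super-polynomial blow-up. The structural fact driving the reduction is the following: two variables $x_i, x_j$ lie in the same symmetry class of $f$ exactly when swapping them leaves $f$ invariant; hence determining the full symmetry partition of $f$ requires resolving a family of equivalence tests between syntactically modified sub-formulas, and equivalence of Boolean expressions is known to be $\mathbf{coNP}$-hard.

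Concretely, I would reduce from 3-SAT. Given a 3-CNF instance $\phi(y_1,\dots,y_n)$, I would construct a CNF formula $f_\phi$ that couples $\phi$ to two ``probe'' variables $x_0, x_1$ so that $x_0$ and $x_1$ are interchangeable in $f_\phi$ if and only if $\phi$ is unsatisfiable. The symmetry level of $f_\phi$ (the number of distinct symmetry classes, or the size of the relevant class) then encodes the SAT answer, so a polynomial-time algorithm for symmetry level would yield $\mathbf{P} = \mathbf{NP}$.

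The main technical obstacle is the gadget design, which must simultaneously (i) keep $f_\phi$ in conjunctive normal form with polynomial size, (ii) enforce the intended equivalence between the symmetry of the pair $(x_0,x_1)$ and the unsatisfiability of $\phi$, and (iii) prevent extraneous symmetries among the $y_i$ variables that would obscure the probe relation. I would handle (iii) by attaching a unique ``tag'' literal to each $y_i$ so that no two $y_i, y_j$ can ever be symmetric; (i) and (ii) would follow by combining the original clauses of $\phi$ with a controlled Tseitin-style encoding of the $x_0, x_1$ coupling, all of which can be written as conjunctions of short clauses. Once the gadget is in place, recovering the SAT answer from an oracle for the symmetry level is immediate, completing the reduction; the principal subtlety, and where I expect the most care to be required, is verifying that the gadget does not silently introduce additional symmetry classes that would shift the level computed by the oracle away from the intended diagnostic value.
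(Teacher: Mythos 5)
The paper does not actually prove this statement: it is imported verbatim (in adapted form) from Theorem~2.5 of the cited Rosenkrantz et al.\ paper, and the text only remarks in one sentence that the original proof reduces CNF-SAT to symmetry-level computation. So the relevant comparison is between your sketch and that reduction, and here there is a genuine gap: the entire technical content of your argument is the sentence ``I would construct a CNF formula $f_\phi$ that couples $\phi$ to two probe variables $x_0,x_1$ so that $x_0$ and $x_1$ are interchangeable if and only if $\phi$ is unsatisfiable.'' That is precisely the claim that needs to be established, and you never exhibit the gadget. Everything downstream (the tag literals for (iii), the Tseitin-style encoding for (i)--(ii), the verification that no extraneous symmetry classes appear) is described in the conditional mood. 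As written, the proposal is a plan for a proof, not a proof. The opening appeal to $\mathbf{coNP}$-hardness of Boolean-expression equivalence is also not load-bearing: that two variables are symmetric in $f$ iff a certain pair of restrictions of $f$ are equivalent does not by itself make the symmetry question hard, because the instances arising from that correspondence are of a special form; hardness has to be re-derived for them, which is again the missing gadget.

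It is also worth noting that the elaborate probe construction is likely unnecessary. The paper's own proof of the companion result (Theorem on layers, which it does prove) uses a far simpler reduction: given a CNF instance $g$, form $f = g \wedge y_1$ for a fresh variable $y_1$ and observe that $g$ is unsatisfiable iff $f \equiv 0$, a constant function, whose canonical structure is degenerate and immediately recognizable. The same one-line trick plausibly settles the symmetry-level version as well, since a constant function is totally symmetric (all variables fall in a single class / are inessential), whereas a satisfiable $f = g \wedge y_1$ cannot be. If you want a self-contained argument, you should either carry out that simple reduction and verify what the symmetry level of a constant function is under the definition in use, or actually build and analyze your $(x_0,x_1)$ gadget; at present you have done neither.
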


This theorem was proved by showing that determining the symmetry level of a Boolean function is equivalent to solving the CNF Satisfiability Problem (SAT), which is known to be \textbf{NP}-hard (Cook-Levin Theorem). We will now show that a polynomial time algorithm (with respect to the number of variables) for the identification of layers and core function of a Boolean function would also yield the symmetry level in polynomial time.

\begin{theorem}\label{thm:layers_nphard}
Unless $\mathbf{P} = \mathbf{NP}$, there is no polynomial time algorithm for determining the standard extended monomial form (Theorem~\ref{thm:he}; \textit{i.e.}, the layers and the core function) of a Boolean function~$f$ specified as a Boolean expression, even if $f$ is a CNF expression.
\end{theorem}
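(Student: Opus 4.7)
The plan is to reduce the symmetry-level problem of Theorem~\ref{thm:symmetry_nphard} to the layer-identification problem. Suppose, toward contradiction, that some algorithm $\mathcal{A}$ computes the standard monomial form $(M_1,\ldots,M_r,p_C,b)$ of Theorem~\ref{thm:he} in polynomial time on any CNF input. I will show that, with $\mathcal{A}$ as a subroutine, one can also determine the symmetry level of an arbitrary CNF $f$ in polynomial time, which contradicts Theorem~\ref{thm:symmetry_nphard}.

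The central observation is that the layer decomposition of $f$ already pins down most of its symmetry partition. By Theorem~\ref{thm:he}, two variables sitting inside the same layer $M_\ell$ with matching canalizing input value $a$ are interchangeable in the algebraic normal form of $f$ and are therefore symmetric; conversely, two variables in the same layer with distinct $a$'s cannot be symmetric, since a swap would force each of them to canalize to the same output at both input values. Moreover, no variable in a deeper layer or in the core polynomial $p_C$ can be symmetric to a canalizing variable sitting higher up, because being symmetric to a canalizing variable would itself make the deeper variable canalizing in $f$, contradicting its layer assignment. These observations leave only pairs of variables inside $p_C$ as candidates requiring further testing. For each such core pair $(x_i,x_j)$, I would form the transposed CNF $f^{(ij)}$ (obtained by syntactically renaming $x_i\leftrightarrow x_j$, which preserves both the CNF format and the input size), run $\mathcal{A}$ on it, and declare the pair symmetric exactly when $\mathcal{A}(f)$ and $\mathcal{A}(f^{(ij)})$ agree modulo the intra-layer reorderings permitted by Theorem~\ref{thm:he}. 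There are only $O(n^2)$ such core pairs, so the procedure calls $\mathcal{A}$ polynomially often, and the symmetry level is then read off from the resulting symmetry partition.

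The main obstacle is justifying the canonical equality test $\mathcal{A}(f)=\mathcal{A}(f^{(ij)})$. Two standard monomial forms represent the same Boolean function iff they agree as tuples $(b,r,(k_\ell)_\ell,(M_\ell)_\ell,p_C)$ once each $M_\ell$ is viewed as an unordered multiset of extended linear factors $(x_{i_j}+a_{i_j})$ and $p_C$ is presented in a canonical algebraic form; this follows from the uniqueness clause of Theorem~\ref{thm:he}, with the two exceptional cases listed there handled individually. Because $\mathcal{A}$ returns both forms in polynomial size, the comparison is itself polynomial time, so the overall reduction is polynomial. This contradicts $\mathbf{P}\neq\mathbf{NP}$ and establishes Theorem~\ref{thm:layers_nphard}.
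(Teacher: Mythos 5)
Your reduction goes in a different direction from the paper's: the paper reduces SAT directly to layer-finding by setting $f = g\wedge y_1$ for a fresh variable $y_1$ and observing that $y_1$ is a canalizing variable of $f$ exactly when $g\not\equiv 0$, so the hypothetical algorithm's layer count is $0$ iff $g$ is unsatisfiable. That argument needs nothing from the algorithm's output except the integer $r$. You instead try to reduce the symmetry-level problem of Theorem~\ref{thm:symmetry_nphard} to layer-finding. The first half of your argument is fine: variables sharing a layer and a canalizing input are symmetric, variables in the same layer with different inputs are not, and a canalizing variable cannot be symmetric to a variable in a deeper layer or in $p_C$ (symmetry is preserved under restricting the other variables, so a partner of a canalizing variable of the relevant subfunction is itself canalizing there). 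So the only undetermined part of the symmetry partition is indeed the one induced on the core variables.

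The gap is exactly the step you flag as ``the main obstacle,'' and it is not repaired by the uniqueness clause of Theorem~\ref{thm:he}. The test $\mathcal{A}(f)=\mathcal{A}(f^{(ij)})$ is sound only if $\mathcal{A}$ outputs $p_C$ in a \emph{canonical, syntactically comparable, polynomial-size} encoding, and the hypothesis of Theorem~\ref{thm:layers_nphard} grants none of this. An algorithm that ``determines the layers and the core polynomial'' may legitimately return $p_C$ as an un-normalized Boolean expression (e.g., the residual clauses of the input CNF), in which case deciding whether $p_C(f)\equiv p_C(f^{(ij)})$ is Boolean-function equivalence testing, itself \textbf{coNP}-hard --- you would be assuming the very power you are trying to derive. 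If instead you insist that $p_C$ be returned as the unique square-free $\mathbb{F}_2$-polynomial (ANF), that output can be exponentially long in the size of the CNF, so a polynomial-time $\mathcal{A}$ of that kind cannot exist for trivial reasons and no contradiction with $\mathbf{P}\neq\mathbf{NP}$ is extracted. The problem is structural, not cosmetic: a generic hard CNF is non-canalizing, so \emph{every} variable lands in $p_C$, your layer-based observations say nothing, and the entire burden of the symmetry computation falls on the core-comparison step that the oracle does not support. To salvage the argument you would need to make the layer data itself carry the hard bit, which is precisely what the paper's construction $f=g\wedge y_1$ accomplishes.
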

The following proof was adapted from  \cite{rosenkrantz2019symmetry}.
\begin{proof}[Proof of Theorem~\ref{thm:layers_nphard}]
Assume there is a polynomial time algorithm $\mathcal{A}$ (in the size of the inputs) that yields the canonical monomial form, as stated in Theorem~\ref{thm:he}, for any Boolean function.
We will show that $\mathcal{A}$ can be used to efficiently solve (in polynomial time) the CNF Satisfiability
problem (SAT) which is known to be \textbf{NP}-hard (Cook-Levin Theorem, see Section 2.6 of~\cite{garey1979}).

Let $g$ be a CNF formula representing an instance of SAT. Let $X = \{x_1, \dots, x_n\}$ denote the set of
Boolean variables used in $g$. Let $y_1 \notin \{x_1, \dots, x_{n}\}$
be a new variable and $h$ be the Boolean function with 1 layer: $h=y_1$.
Now, we create another CNF formula $f$ as $f = g\wedge h$. Note that $f$ is a function of $n+1$ variables, namely $x_1,\dots, x_n,y_1$.
Since $g$ is a CNF formula, so is~$f$. 

Consider the following cases for the satisfiability of $g$.
%
If $g$ is not satisfiable, then $f$ is also not satisfiable; that is, for all inputs, the value of $f$
is~0. By Definition~\ref{def:canalizing}, constant functions do not possess any canalizing variables. Thus the number of layers of $f$ is~0.
On the other hand, if $g$ is satisfiable, this implies that $y_1=0$ canalizes $f$ to $0$, \textit{i.e.}, $y_1$ is a canalizing variable of $f$. Therefore~$f$ has at least one layer.
%

%
%
%
%
%
%

Suppose we execute the polynomial time algorithm $\mathcal{A}$ on the function $f$ defined above. 
If $g$ is not satisfiable, then the number of layers of $f$ is 0.
Thus $\mathcal{A}$ will produce zero layers.
If $g$ is satisfiable, then the number of layers of $f$ is at least $1$; so, $\mathcal{A}$ will produce at least one
layer. In other words, $g$ is not satisfiable if and only if the number of layers produced by~$\mathcal{A}$ is~0.
Since $\mathcal{A}$ runs in polynomial time, we have an efficient algorithm for SAT, contradicting the assumption
that \textbf{P} $\neq$ \textbf{NP}.
\end{proof}


\section{Algorithms for revealing the canalizing layer structure}
\label{sec:algorithms}

In this section we present algorithms for finding the layers of canalization of a Boolean function in three different cases: 
\begin{itemize}
    \item[i.] The function is presented as a polynomial $f: \{0,1\}^n\rightarrow\{0,1\}$ (see Section~\ref{alg:1}).
    \item[ii.] The function is given in truth table format, \textit{i.e.}, as a vector (see Section~\ref{alg:2}).
    \item[iii.] The function is nested canalizing with a known order of conditionally canalizing variables (see Section~\ref{alg:3}). 
\end{itemize}
The output produced by Algorithms \ref{alg_layers1} and \ref{alg_layers2} follows the layer structure format in Definition~\ref{def:layer_structure}.


\subsection{Algorithm for Boolean functions represented as polynomials}\label{alg:1}

The first algorithm, implemented in Macaulay2, takes as input \emph{any} Boolean function and returns its canalizing layers. The algorithm is based on successive evaluations of the inputs of the 
function to check for canalizing inputs and subsequent removal of these from the search.

\begin{algorithm*}
\begin{algorithmic}[1]
\caption{Polynomial Form}
\label{alg_layers1}
\Require A Boolean function $f: \{0,1\}^n\rightarrow\{0,1\}$ as a polynomial.
\Ensure  Layer structure of $f$.

\If{$\#support(f)==0$} \Return {
$(\{\},f)$} 
\Comment{$f$ is constant}
\EndIf

\If{$\#support(f)==1$} \Return {
$(\{(\{(x_1,1)\},f(1))\},1)$ 
\Comment{Exceptional Cases:}

\Comment{$f=x_1$ or $f=x_1+1$}}
\EndIf

\State $Layers=\{\}$

\State Set $NewVar = support(f)$\Comment{get variables of $f$}
\While {$NewVar\neq \emptyset$}
\State $NewLayer=\{\}$ and $NegNewLayer=\{\}$
 \ForAll{ $x_i\in NewVar$ }
   \ForAll{$a \in \{0, 1 \}$}
    	\State $\hat{X} = (x_1,\dots,x_{i-1},x_i = a,x_{i+1},\dots,x_n)$\Comment{substitute $x_i=a$}
   	\State $g = f(\hat{X})$\Comment{evaluate at $x_i=a$}
   \EndFor 
   	\If {$support(g)==\emptyset$ }
		\State Append $(x_i,a)$ to $NewLayer$\Comment{store cond. canalizing variable and input}
		\State Append $(x_i,a+1)$ to $NegNewLayer$ \Comment{store noncanalizing input}
  	\EndIf           
 \EndFor
\State Append $(NewLayer, f(NewLayer))$ to $Layers$ 
\Comment{evaluate $f$ on canalizing inputs}

\Comment{of variables in $NewLayer$;} 

\Comment{append $NewLayer$ and}

\Comment{canalized output to $Layers$}
\State Set $NewVar=NewVar\backslash \{\text{variables in } NewLayer\}$
\State Set $f = f(NegNewLayer)$\Comment{evaluate $f$ on noncanalizing inputs}

\Comment{of variables in $NegNewLayer$} 
 \If {$NewLayer==\emptyset$}
    \State Break;
 \EndIf
\EndWhile \Comment{end of while loop}
\State \Return $Layers$, $f(NewVar)$\Comment{return layers and core function.}
\end{algorithmic}
\end{algorithm*}


\subsection{Algorithm for Boolean functions represented as truth tables}\label{alg:2}

The second algorithm, implemented in Python, uses elementary linear algebra and recursion to compute the layer structure of any Boolean function $f(x_1,\ldots,x_n)$ in $n$ variables.
Consider a \emph{binary truth table} represented as a $2^n \times (n+1)$-matrix over $\{0,1\}$ where the \emph{left-hand side} of dimension $2^n \times n$ corresponds to the inputs of a function $f:\{0,1\}^n\rightarrow \{0,1\}$ and the $(2^n\times 1)$-dimensional \emph{right-hand side} corresponds to the outputs of $f$ when evaluated on the inputs. Like most authors (e.g., \cite{he2016stratification}), we exclude constant functions from the set of canalizing functions (Definition~\ref{def:canalizing}). In Algorithm~\ref{alg_layers2}, we therefore first check if $f$ is constant (row 3). In the following, assume $f$ is not constant.

Let $\mathbf T_n = (\mathbf t_1,\ldots,\mathbf t_n) \in \{0,1\}^{2^n \times n}$ be the left-hand side of a binary truth table. That is, $\mathbf t_i$ is a binary vector of length $2^n$, which is $1$ in the rows of $\mathbf T_n$ where $x_i=1$, and $0$ otherwise. Likewise, we will think of the right-hand side of a binary truth table as a vector $f$.
With $\langle \cdot, \cdot \rangle$ denoting the dot product, we have the following for any non-constant Boolean function~$f$:
\begin{itemize}
    \item $x_i=1$ canalizes $f$ to the value $1$ if and only if $\langle \mathbf t_i,f \rangle = 2^{n-1}$;
    \item $x_i=0$ canalizes $f$ to the value $1$ if and only if $\langle \mathbf 1-\mathbf t_i,f \rangle = 2^{n-1}$;
    \item $x_i=1$ canalizes $f$ to the value $0$ if and only if $\langle \mathbf t_i,\mathbf 1-f \rangle = 2^{n-1}$;
    \item $x_i=0$ canalizes $f$ to the value $0$ if and only if $\langle \mathbf 1-\mathbf t_i,\mathbf 1-f \rangle = 2^{n-1}$.

\end{itemize}




With this we can define four sets
\begin{align*}
I_{1\to 1}(f) &:= \Big\{i\in\{1,\ldots,n\}~\Big|~ \langle \mathbf t_i,f \rangle = 2^{n-1} \Big\},\\
I_{0\to 1}(f) &:= \Big\{i\in\{1,\ldots,n\}~\Big|~  \langle  \mathbf 1-\mathbf t_i,f \rangle = 2^{n-1} \Big\},\\
I_{1\to 0}(f) &:= \Big\{i\in\{1,\ldots,n\}~\Big|~ \langle \mathbf t_i,\mathbf 1-f \rangle = 2^{n-1} \Big\},\\
I_{0\to 0}(f) &:= \Big\{i\in\{1,\ldots,n\}~\Big|~ \langle  \mathbf1-\mathbf t_i,\mathbf 1-f \rangle = 2^{n-1} \Big\}.
\end{align*}
Then $I_{a\to b}(f)$ contains all the indices of canalizing variables that canalize $f$ to the canalized output value $b$ if they take on the canalizing input value $a$, for $a,b\in\{0,1\}$. For brevity, let $I_{b}(f) = I_{0\to b}(f) \cup I_{1\to b}(f)$. In Algorithm~\ref{alg_layers2}, we do not calculate these sets for constant functions, which are by Definition~\ref{def:canalizing} not canalizing. Therefore we have for any non-constant~$f$ 
\begin{itemize}
    \item $|I_0(f) \cup I_1(f)|$ is the number of canalizing variables.
    \item $I_0(f) = I_1(f) = \emptyset \Longleftrightarrow f$ is not canalizing.
\end{itemize}
By Remark~\ref{rem:layer_output}, all canalizing variables canalize $f$ to the same output value and by Remark~\ref{rem:unique} each canalizing variable in functions with more than one essential variable possesses a unique canalized output. Thus, for non-trivial functions -  with more than one essential variable - we have 
\begin{itemize}
    \item $I_0(f) \neq \emptyset \Longrightarrow I_1(f) = \emptyset$;
    \item $I_1(f) \neq \emptyset \Longrightarrow I_0(f) = \emptyset$.
\end{itemize}

This allows us to use these sets to define an iterative process that finds all layers of a Boolean function, recording layer by layer the canalizing inputs, canalized outputs and conditionally canalizing variables (see Algorithm~\ref{alg_layers2}). Note that Boolean functions with a single essential variable have two choices for the canalized output value (Remark~\ref{rem:unique}). Theorem~\ref{thm:he} (``exceptional case''~2) ensures a unique layer structure for these functions by forcing a canalized output of~$1$. To agree with this default choice in Algorithm~\ref{alg_layers2}, we first check for canalizing variables in the set $I_1(f)$.

\begin{example}
Let $n=3$ and consider the left-hand side of the $(2^3\times 3)$-truth table $$\mathbf T_3 = \begin{pmatrix}
0 & 0 & 0\\
0 & 0 & 1\\
0 & 1 & 0\\
0 & 1 & 1\\
1 & 0 & 0\\
1 & 0 & 1\\
1 & 1 & 0\\
1 & 1 & 1
\end{pmatrix}.$$
%
Let $f(x_1,x_2,x_3)=x_1\land (x_2 \lor x_3)$. As a  vector (with variable order $x_1,x_2,x_3$), we can express~$f$ as $f = (0,0,0,0,0,1,1,1)$. 
Then we have 
$$f\cdot \mathbf T_3 = (3,2,2),\ f\cdot (\mathbf 1-\mathbf T_3) = (0,1,1),$$ 
where $\mathbf 1-\mathbf T_3$ is the difference of the $(8\times 3)$-matrix of ones and $\mathbf T_3$ modulo 2.
None of these entries equal $2^{3-1}=4$; hence $I_1(f) = \emptyset$.

On the other hand, $$(1-f)\cdot \mathbf T_3 = (1,2,2),\ (1-f)\cdot (\mathbf 1-\mathbf T_3) = (4,3,3).$$ Thus $I_0(f) = \{1\}$ and $x_1$ is therefore the only canalizing variable of $f$ with $x_1 = 0$ canalizing $f$ to $0$. Removing the first layer $\{x_1\}$ from $f$ yields the subfunction $g = f(1,x_2,x_3) = x_2 \lor x_3$ or $(0,1,1,1)$ as a  vector. 

Now we repeat the process and find that $I_0(g) = \emptyset$ since $(1-g)\cdot \mathbf T_2=(0,0)$ and $(1-g)\cdot (\mathbf 1-\mathbf T_2)=(1,1)$.  Likewise $I_1(g) = \{2,3\}$ since $g\cdot \mathbf T_2=(2,2)$ and $g\cdot (\mathbf 1-\mathbf T_2)=(1,1)$. Therefore $x_2$ and $x_3$ are canalizing variables of $g$ and thus conditionally canalizing variables of $f$. Thus $\{x_2,x_3\}$ forms the second layer of $f$ with canalized output~1. The process is completed because the new subfunction $h=g(1,1)=1$ contains no more variables. Note that for non-NCFs, the process completes when no new conditionally canalizing variables are found, \textit{i.e.}, when both $I_0$ and $I_1$ are empty. 
\end{example}

\begin{algorithm*}
\begin{algorithmic}[1]
\caption{Truth Table Form}
\label{alg_layers2}
\Require A Boolean function $f: \{0,1\}^n\rightarrow\{0,1\}$ as a vector of length $2^n$.
\Ensure  Layer structure of $f$.
\State $Layers=\{\}$\Comment{stores all layers}\label{repeat}
\State Set $n=\log_2($length\ $ f)$ \Comment{ number of variables of $f$; $n$ and $f$ will change} 
\If {$support(f)==\emptyset$ } \Comment{$f$ is constant}
    \State \Return $(Layers,f)$ \Comment{constant functions are not canalizing - Def~\ref{def:canalizing}}
\EndIf
\State Initialize $\mathbf T_n$ \Comment{left-hand side of a truth table} 
\State Set $v=(1,\ldots ,1)$ \Comment{vector of $2^n$ ones} 
\State Set $M=[v',\ldots ,v']$ \Comment{matrix of $2^n\times n$ ones; $v'$ = transpose of $v$}
\State  $NewLayer=\{\}$ \Comment{stores the current layer}
\State  $NegNewLayer=\{\}$ \Comment{stores noncanalizing input of current layer}
\State Compute $(v-f)\cdot \mathbf T_n$ and $(v-f)\cdot (M-\mathbf T_n)$
\State Derive $I_{0\to 1}(f)$, $I_{1\to 1}(f)$, and $I_1(f)$
\If{$I_1(f)\neq \emptyset$}
    \ForAll{$a\in\{0,1\}$}
        \ForAll{$x_i\in I_{a\to 1}$}
            \State Append $(x_i,a)$ to $NewLayer$ \Comment{store cond. canalizing variable and  input}
		    \State Append $(x_i,a+1)$ to $NegNewLayer$ \Comment{store noncanalizing input}
        \EndFor
    \EndFor
        \State Append ($NewLayer$,1) to $Layers$\Comment{store canalized output}
\Else{
    \State Compute $f\cdot \mathbf T_n$ and $f\cdot (M-\mathbf T_n)$
    \State Derive $I_{0\to 0}(f)$, $I_{1\to 0}(f)$, and $I_0(f)$
    \If{$I_0(f)\neq \emptyset$}
        \ForAll{$a\in\{0,1\}$}
        \ForAll{$x_i\in I_{a\to 0}$}
            \State Append $(x_i,a)$ to $NewLayer$ \Comment{store cond. canalizing variable and input}
		    \State Append $(x_i,a+1)$ to $NegNewLayer$ \Comment{store noncanalizing input}
        \EndFor
    \EndFor
        \State Append ($NewLayer$,0) to $Layers$\Comment{store canalized output}
    \Else~\Return $(Layers,f)$ \Comment{return layers and core function}
    \EndIf
}
\EndIf
\State Set $f = f( NegNewLayer)$ 
\Comment{evaluate $f$ on noncanalizing inputs}

\Comment{of variables in $NegNewLayer$;} 
\State \textbf{Goto} Step \ref{repeat} \Comment{repeat using the new $f$}
\end{algorithmic}
\end{algorithm*}


\subsection{Algorithm for nested canalizing functions}
\label{alg:3}

The authors of \cite{jarrah2007nested} developed a parametrization of NCFs corresponding to points in the affine space $\mathbb{F}_2^{2^n}$ that satisfy a certain collection of polynomial equations. They observed that since the terms of a Boolean polynomial consist of square-free monomials, one can uniquely index monomials by the subsets of $[n]=\{1,\ldots,n\}$ corresponding to the variables appearing in the monomial. Thus the set of all Boolean polynomials can be expressed as 
\begin{equation}\label{coeffs}
\displaystyle \Bigg\{\sum_{S\subseteq [n]}c_s\prod_{i\in S}x_i \,\Bigg|\, c_s\in \mathbb F_2\Bigg\}.
\end{equation}
We use this parametrization, and in particular Corollary~3.6 
in~\cite{jarrah2007nested}, 
to suggest a linear time algorithm for finding the canalizing layers of a given NCF. The algorithm is based on the following proposition, which is a straightforward consequence of Corollary 3.6 in~\cite{jarrah2007nested}.

\begin{proposition}

Let $f$ be a Boolean polynomial written in the form of (\ref{coeffs}).
%
If for all $j=1,\ldots, n-1$
$$
c_{\{1,\ldots, j\}}=c_{\{1,\ldots, j,j+1\}}c_{[n]\setminus \{1,\ldots, j,j+1\}},
$$
then $x_j$ and $x_{j+1}$ are in the same layer.
\end{proposition}\label{thm:lin-time}

Therefore if the canalizing order of the $n$ variables of an NCF is known, one needs to check at most $n-1$ equalities; that is, the algorithm suggested by Proposition~\ref{thm:lin-time} will take linear time in the number of variables.

\subsection{Time complexity of the algorithms}
\label{time}

As an indication for the time complexity, we recorded the average run time of both algorithms for random Boolean functions in $4$ to $16$ variables. We tested two extremes: the run time for random noncanalizing functions and for random nested canalizing (\textit{i.e.}, most canalizing) functions. For each even number of variables $n\in [4,16]$ and for both types of functions, we randomly generated five Boolean functions of that type as a  vector of length $2^n$, which served as the input for Algorithm~\ref{alg_layers2}. To use the same sets of random functions in Algorithm~\ref{alg_layers1}, we transformed the  vectors into polynomials in disjunctive normal form.

At small numbers of variables ($n\leq 10)$, both algorithms were on average faster at determining that there were no canalizing variables than at finding all layers of an NCF~(Figure~\ref{fig:time}A). Interestingly for functions in many variables, this switched for Algorithm~\ref{alg_layers1}; its run time increased only marginally for NCFs, much slower than for noncanalizing functions. This is likely because Algorithm~\ref{alg_layers1} benefits from a high amount of structure in nested canalizing polynomials, highlighting the advantage of the use of the polynomial representation over truth tables (\textit{i.e.}, vectors). Note that the algorithms are implemented in two different programming languages, which prohibits a direct comparison of the run time. 

Algorithm~\ref{alg_layers2} can be sped up by pre-computing the left-hand side of the truth table $\mathbf T_m$ for all $m$ with $1\leq m \leq n$. In our limited experiment this reduced the average run time by $38.6\%$ ($n=4$) to $95.0\%$ ($n=20$) for noncanalizing functions and by $12.8\%$ ($n=4)$ to $83.4\%$ ($n=20$) for NCFs~(Figure~\ref{fig:time}B).

\begin{figure*}[h]
    \centering
    \includegraphics[width=\textwidth]{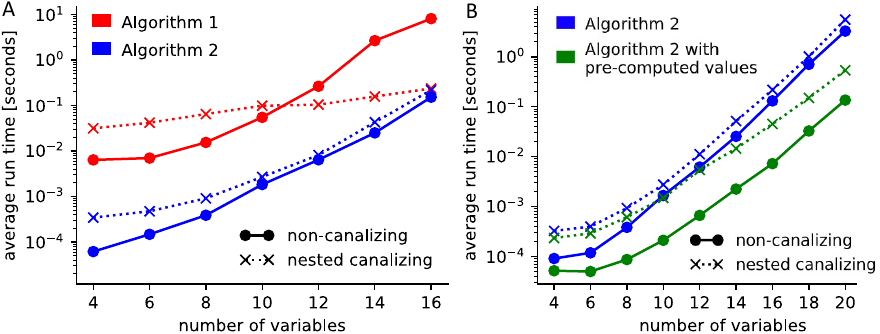}
    \caption{{\bf Comparison of the run time of the algorithms.} (A) For both algorithms, we compared the average computation time of the layers for noncanalizing (solid lines with circles) and nested canalizing (dotted lines with {\sffamily x}) Boolean functions with a fixed number of variables, $n$ ($x$-axis). (B) Comparison of the basic version of Algorithm 2 (blue) and a modified version of the algorithm where the left-hand side of the truth table $\mathbf T_m$ has been pre-computed for all~$m$ with $1\leq m \leq n$.}
    \label{fig:time}
\end{figure*}

The findings from these bench-marking experiments align with theoretical computations of the worst-case time complexity, which is $O(n^22^n)$ for both algorithms. For Algorithm~\ref{alg_layers1}, the number of checks (in the worst case where, in each iteration, the last evaluated variable is the only canalizing) is
\[
n+n-1+\cdots+2+0=\frac{n(n+1)}{2}-1 = O(n^2)
\]
(notice the zero because the last layer of an NCF contains at least two variables, Theorem~\ref{thm:he}) and each of these checks requires $O(2^n)$  polynomial evaluations. Thus, Algorithm~\ref{alg_layers1} is $O(n^22^n)$. For Algorithm~\ref{alg_layers2}, the matrices that are evaluated to find canalizing variables are of size $n2^n$ (Lines 11 and 20 in Algorithm~\ref{alg_layers2}) and in the worst case, we find $n$ canalizing variables at once yielding another iteration over $n$ variables (Lines 14 and 23 Algorithm~\ref{alg_layers2}). Thus, Algorithm~\ref{alg_layers2} is $O(n^22^n)$.

\section{Applications}
\label{sec:applications}
In this section, we present two applications that utilize the canalizing structure of an NCF: converting the function into disjunctive normal form (Section~\ref{sec:DNF}), and reverse-engineering an unknown function from (partial) knowledge of its layers (Section~\ref{sec:rev-eng}).
\subsection{Finding disjunctive normal forms using canalizing layers}
\label{sec:DNF}


A disjunctive normal form (DNF) is a canonical representation of Boolean functions, which consists of a disjunction of conjunctions. A disjunctive normal form is a useful representation of Boolean functions that has been used for many applications.
For instance, a minimal DNF is needed for the identification of stable motifs~\cite{Zanudo:2015aa,yang18}.
In this section we will use the canalizing layers  to obtain a disjunctive normal form expression for a nested canalizing function. 
In general, the Quine-McCluskey algorithm~\cite{Quine1952} can be used to find a disjunctive normal form for any Boolean function.

The following proposition is a special case of Theorem~\ref{thm:he} where we assume that the Boolean function 
is nested canalizing.

\begin{proposition}\label{prop:ncf_layers}
Every nested canalizing function $f(x_1,\dots,x_n)$ can be uniquely written as
\begin{equation}
\label{eq:layers}
f(x_1,\dots,x_n) = M_1(M_2(\dots(M_{r-1}(M_r+1)+1)\dots)+1)+b,
\end{equation}
where $M_i = \displaystyle\prod_{j=1}^{k_i}(y_{i_j}+1)$, $y_{i_j}\in\{x_{i_j},\overline{x}_{i_j}\}$, and $n = k_1+\cdots+ k_r$.
Each variable $y_i$ appears in exactly one of the $M_1,\dots,M_r$. 
\end{proposition}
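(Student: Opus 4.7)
The plan is to derive the proposition directly from Theorem~\ref{thm:he} by specializing to the NCF case. First I would invoke Theorem~\ref{thm:he} on $f$, obtaining a decomposition
\[ f = M_1\bigl(M_2(\cdots(M_r p_C + 1)\cdots)+1\bigr)+b \]
with canalizing depth $k=\sum_{i=1}^r k_i$ and a core polynomial $p_C$. Since $f$ is an NCF, its canalizing depth equals $n$, so $\sum_i k_i = n$ and the variables of $f$ are exhausted by the $M_i$'s; as each variable belongs to exactly one of $\{M_1,\ldots,M_r,p_C\}$, this forces $p_C$ to contain no variables, hence $p_C\in\{0,1\}$.

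The step I expect to be the main obstacle is ruling out $p_C\equiv 0$. The plan here is a ``vanishing variable'' argument: if $p_C\equiv 0$, then $M_r p_C + 1 \equiv 1$ and propagating this constant outward through the nested expression eliminates the variables of $M_r$ from the formula for $f$ entirely. In particular $f$ would not actually depend on those variables, which contradicts the NCF hypothesis that every variable is eventually canalizing and therefore must genuinely influence $f$. Hence $p_C\equiv 1$, and substituting this into (\ref{eq:matts_theorem}) collapses $M_r p_C + 1$ to $M_r + 1$, yielding the nested form (\ref{eq:layers}).

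It remains to reconcile the two ways of writing the extended monomials. Theorem~\ref{thm:he} writes each factor of $M_i$ as $x_{i_j}+a_{i_j}$ with $a_{i_j}\in\{0,1\}$. Working in $\mathbb F_2$, define $y_{i_j}:=x_{i_j}$ when $a_{i_j}=1$ and $y_{i_j}:=\overline{x}_{i_j}$ when $a_{i_j}=0$; then $x_{i_j}+a_{i_j}=y_{i_j}+1$ in both cases, so $M_i=\prod_j(y_{i_j}+1)$ with $y_{i_j}\in\{x_{i_j},\overline{x}_{i_j}\}$, exactly as asserted. Uniqueness of the decomposition, the identity $n=\sum_i k_i$, and the partition of the variables among the $M_i$'s all transfer directly from Theorem~\ref{thm:he}, completing the plan.
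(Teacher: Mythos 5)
Your proposal is correct and follows exactly the route the paper intends: the paper offers no written proof, simply asserting the proposition as the specialization of Theorem~\ref{thm:he} to NCFs, and your argument fills in that specialization faithfully (depth $k=n$ forces $p_C$ to be constant, $p_C\equiv 0$ is excluded because it would make $f$ independent of the variables of $M_r$, and the change of notation $x_{i_j}+a_{i_j}=y_{i_j}+1$ is verified in both cases). The only substance you add beyond the paper is the explicit elimination of $p_C\equiv 0$, which is a legitimate and correctly handled detail rather than a divergence in approach.
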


\begin{remark}\label{rem:ncf_layers}~
\begin{enumerate}
  \item Note that the extended monomials in Proposition~\ref{prop:ncf_layers} can be written as
\begin{displaymath}
M_i = \prod_{j=1}^{k_i}(y_{i_j}+1) = \overline{y}_{i_1}\wedge\cdots\wedge \overline{y}_{i_{k_i}}.
\end{displaymath}
  \item The negated form of $f$ is $f+1$. Thus to obtain the canalizing layers description of $\overline{f}$ we add~1 to Equation~\ref{eq:layers}:
\begin{equation}
\label{eq:layers1}
\begin{array}{l}
\overline{f}(x_1,\ldots,x_n) = \\
M_1(M_2(\cdots(M_{r-1}(M_r+1)+1)\cdots)+1)+b+1.
\end{array}
\end{equation}  
\end{enumerate}
\end{remark}

Now using the first part of Remark~\ref{rem:ncf_layers}, we introduce the following notation
\begin{displaymath}
\overline{M}_i = y_{i_1}\vee\cdots\vee y_{i_{k_i}}.
\end{displaymath}

Given a nested canalizing function $f$ written in the format of Equation~\ref{eq:layers}, one can obtain
a disjunctive normal form for~$f$. 

\begin{theorem}
Let $f$ be a nested canalizing function that is written as in Equation~\ref{eq:layers}.
Then the following formulas provide a disjunctive normal form for $f$ depending of the value of $b$ in Equation~\ref{eq:layers}.
\begin{enumerate}\label{thm:dnf}
    \item If $b=0$, then a DNF of $f$ is given by
    \begin{equation}
\label{eq:DNF0}
f = \bigvee_{i=0}^{\lfloor r/2\rfloor}\left(M_{2i+1}\wedge\bigwedge_{j=1}^i\overline{M}_{2j} \right),
\end{equation}
where $M_{r+1}=1$ if needed.
    \item If $b=1$, then a DNF of $f$ is given by
    \begin{equation}
\label{eq:DNF1}
f = \bigvee_{i=0}^{\lfloor r/2\rfloor}\left(\overline{M}_{2i+1}\wedge\bigwedge_{j=1}^i M_{2j} \right),
\end{equation}
where $\overline{M}_{r+1}=1$ if needed.
\end{enumerate}
\end{theorem}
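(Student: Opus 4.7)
The plan is to prove both parts by induction on the number of layers $r$, and to exploit the duality between the two formulas in order to halve the work. Indeed, part~2 of Remark~\ref{rem:ncf_layers} shows that $\overline{f}$ is the NCF obtained by keeping $M_1,\ldots,M_r$ and flipping $b$; so Equations~\ref{eq:DNF0} and~\ref{eq:DNF1} are Boolean complements of each other (via De Morgan with $M_i\leftrightarrow\overline{M}_i$), and proving one automatically yields the other. For the base case $r=1$, Equation~\ref{eq:layers} gives $f=M_1+b$. When $b=0$, the only summand in Equation~\ref{eq:DNF0} is the $i=0$ term, which equals $M_1$ because the inner conjunction is empty; when $b=1$, Equation~\ref{eq:DNF1} reduces analogously to $\overline{M}_1 = M_1+1$, again matching $f$.

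For the inductive step, I would peel off the outer layer. Write
\[
 f = M_1\cdot h + b,\qquad h = M_2\bigl(M_3(\cdots(M_r+1)\cdots)+1\bigr)+1,
\]
and observe that $h = \overline{g}$, where $g$ is itself an NCF on the shifted layers $M_2,\ldots,M_r$ with outer constant $0$. Hence in Boolean form $f = (M_1\wedge \overline{g})\oplus b$. Applying the inductive hypothesis to $g$ yields its DNF; De Morgan turns that into a CNF for $\overline{g}$, and distributing after conjoining with $M_1$ and adjusting for $b$ should recover the claimed closed form.

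The main obstacle will be matching the distributed expression back to the compact form in Equations~\ref{eq:DNF0}--\ref{eq:DNF1}. Peeling off $M_1$ shifts every index by one, swapping odd and even positions, so a parity case split on $r$ is unavoidable; moreover, the overlapping $i$-summands have to be collapsed using the absorption law $A\vee(\overline{A}\wedge B)=A\vee B$ before the pattern becomes visible. A cleaner route avoids induction altogether: a direct expansion of Equation~\ref{eq:layers} in $\mathbb{F}_2$ gives
\[
 f \;=\; \sum_{k=1}^{r}\prod_{j=1}^{k}M_j \;+\; b \pmod{2},
\]
and since $\prod_{j=1}^{k}M_j=1$ precisely when $M_1=\cdots=M_k=1$, the sum equals $K\bmod 2$, where $K$ is the length of the longest all-ones prefix of $(M_1,\ldots,M_r)$. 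Therefore $f=1$ exactly when $K$ has parity $1-b$, and translating the condition ``$M_1=\cdots=M_K=1$ and either $M_{K+1}=0$ or $K=r$'' over the admissible values of $K$ yields a DNF in the $M_i$ and $\overline{M}_i$ that one then has to verify matches Equations~\ref{eq:DNF0} and~\ref{eq:DNF1}.
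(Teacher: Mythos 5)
Your second, direct route---expanding Equation~\ref{eq:layers} over $\mathbb{F}_2$ to get $f=\sum_{k=1}^{r}\prod_{j=1}^{k}M_j+b$ and reading off the value of $f$ from the parity of the longest all-ones prefix $K$ of $(M_1,\ldots,M_r)$---is sound, and it is in fact more than the paper's own proof does: the paper only distributes $M_{2i+1}\wedge\overline{M}_{2j}$ to confirm that the right-hand sides are syntactically disjunctions of conjunctions of literals, and never establishes that they equal $f$. However, two steps of your plan fail. The duality shortcut is wrong: flipping $b$ does complement $f$ (Remark~\ref{rem:ncf_layers}), but the complement of the DNF in Equation~\ref{eq:DNF1} is, by De Morgan, the \emph{conjunctive} normal form $\bigwedge_i\bigl(M_{2i+1}\vee\bigvee_{j=1}^{i}\overline{M}_{2j}\bigr)$, not Equation~\ref{eq:DNF0}; the syntactic swap $M_i\leftrightarrow\overline{M}_i$ inside a DNF is not complementation, so proving one case does not yield the other.

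More seriously, the final ``verify that it matches'' step cannot succeed for $b=0$, because your prefix computation, carried out honestly, shows that Equation~\ref{eq:DNF0} as printed is not equal to $f$. For $b=0$ one gets $f=1$ iff $K$ is odd, with disjuncts $M_1\wedge\cdots\wedge M_{2i+1}\wedge\overline{M}_{2i+2}$; after the absorption you mention these collapse to $\bigvee_i\bigl(\overline{M}_{2i+2}\wedge\bigwedge_{j=0}^{i}M_{2j+1}\bigr)$ (with $\overline{M}_{r+1}:=1$), i.e.\ the \emph{odd}-indexed monomials accumulate unnegated and a single \emph{even}-indexed one is negated---the opposite of the printed pattern. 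Concretely, for $r=2$ one has $f=M_1(M_2+1)=M_1\wedge\overline{M}_2$ (exactly as in the paper's own first example), whereas Equation~\ref{eq:DNF0} contains the disjunct $M_1$ and hence evaluates to $1$ at $M_1=M_2=1$, where $f=0$. Your argument does correctly establish Equation~\ref{eq:DNF1}: there the prefix disjuncts $M_1\wedge\cdots\wedge M_{2i}\wedge\overline{M}_{2i+1}$ reduce by absorption against $\overline{M}_1\vee(M_2\wedge\overline{M}_3)\vee\cdots$ to precisely the printed terms $\overline{M}_{2i+1}\wedge\bigwedge_{j=1}^{i}M_{2j}$. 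So your method proves part 2 and would let you state and prove a corrected part 1, but it cannot---and should not---close as a proof of the theorem as written.
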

\begin{proof} We consider two cases for the value of $b$.
\begin{enumerate}
    \item If $b=0$, then from Equation~\ref{eq:layers} $f$ can be written as
\[
\begin{array}{l}
f(x_1,\dots,x_n) = \\
M_1(M_2(\dots(M_{r-1}(M_r+1)+1)\dots)+1),
\end{array}
\] 
which can also be written as
\[
f = \bigvee_{i=0}^{\lfloor r/2\rfloor}\left(M_{2i+1}\wedge\bigwedge_{j=1}^i\overline{M}_{2j} \right).
\]
Then notice that
\[
\begin{array}{ll}
M_{2i+1}\wedge\overline{M}_{2j} =& M_{2i+1}\wedge(y_{i_1}\vee\cdots\vee y_{i_{k_i}}) =   \\
&\displaystyle\bigvee_{k=1}^{k_i}M_{2i+1}\wedge y_{i_j}.
\end{array}
\]
Thus, Equation~\ref{eq:DNF0} is in disjunctive normal form.
\item If $b=1$, then from Equation~\ref{eq:layers} $f$ can be written as
\[
\begin{array}{l}
f(x_1,\dots,x_n) = \\
M_1(M_2(\dots(M_{r-1}(M_r+1)+1)\dots)+1)+1,
\end{array}
\] 
which can also be written as    
\[    
f = \bigvee_{i=0}^{\lfloor r/2\rfloor}\left(\overline{M}_{2i+1}\wedge\bigwedge_{j=1}^i M_{2j} \right),
\]
Then notice that
\[
 \overline{M}_{2i+1}\wedge M_{2j} = (y_{i_1}\vee\cdots\vee y_{i_{k_i}})\wedge M_{2j} = \bigvee_{k=1}^{k_i}y_{i_j}\wedge M_{2j}.
\]
Thus Equation~\ref{eq:DNF1} is in disjunctive normal form.
\end{enumerate}

\end{proof}

\begin{remark} For Theorem~\ref{thm:dnf}, when $i=0$, the term inside the parentheses of Equation~\ref{eq:DNF0} (Equation~\ref{eq:DNF1}) reduces to $M_1$ ($\overline{M}_{1}$).
\end{remark}

\begin{example}
Let $f = x_1x_2x_3x_4 + x_1x_2x_3 + x_1x_2x_4 + x_2x_3x_4 + x_1x_2 + x_1x_3 + x_2x_3 + x_2x_4 + x_1 + x_2 + x_3 + 1$.
Using Algorithm~\ref{alg:1}, we find that the layer structure of this function is $f = M_1(M_2+1)$ where $M_1=(x_1+1)(x_3+1)$ and $M_2=x_2(x_4+1)$.

Thus the value of $b$ in Equation~\ref{eq:layers} is $b=0$. Then from Equation~\ref{eq:DNF0}, the disjunctive normal form is $f = M_1\wedge \overline{M}_2 = (M_1 \wedge \overline{x}_2)\vee (M_1 \wedge x_4)=
(\overline{x}_1\wedge \overline{x}_3 \wedge \overline{x}_2)\vee (\overline{x}_1\wedge \overline{x}_3 \wedge x_4)$.
\end{example}

\begin{example}
Let $f = x_1x_2x_3x_4 + x_1x_2x_3 + x_2x_3x_4 + x_2x_3 + x_4$. Using Algorithm~\ref{alg:1}, we find that
the layer structure of this function is $f = M_1(M_2+1)+1$ where $M_1=x_4+1$ and $M_2=(x_1+1)x_2x_3$. 
Thus the value of $b$ in Equation~\ref{eq:layers} is $b=1$.
Then from Equation~\ref{eq:DNF1}, the disjunctive normal form is $f = \overline{M}_1\vee M_2 = x_4\vee (\overline{x}_1\wedge x_2 \wedge x_3)$.
\end{example}


\subsection{Reverse engineering nested canalizing functions}
\label{sec:rev-eng}

When part of the canalizing layer structure of an NCF is known, one can generate all NCFs with that structure using the parametrization of NCFs given in (\ref{coeffs}), together with the following corollary to Theorem~\ref{thm:he}.

\begin{corollary}\label{cor:factor}

Let $f(x_1,\ldots,x_n)$ be a polynomial over $\mathbb{F}_2$. 

\begin{enumerate}

\item A variable $x_i$ is canalizing in the first layer of $f$ with canalizing input 0 if and only if
$f(x_1,\ldots,x_n)-f(0,\dots,0)=x_i\cdot g$ for some function $g$.

\item A variable $x_i$ is canalizing in the first layer of $f$ with canalizing input 1 if and only if 
$f(x_1,\ldots,x_i+1,\ldots,x_n)-f(0,\ldots,0, x_i=1,0,\ldots,0)=x_i\cdot g$ for some function $g$.

\end{enumerate}

\end{corollary}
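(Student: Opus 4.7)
The plan is to reduce both parts to a single algebraic fact about Boolean polynomials: in the Boolean ring $\mathbb{F}_2[x_1,\ldots,x_n]/(x_j^2-x_j)$, every polynomial $p$ has a unique decomposition $p = p|_{x_i=0} + x_i\cdot q$ where $q$ does not involve $x_i$, and consequently $p = x_i\cdot g$ for some $g$ if and only if $p|_{x_i=0}\equiv 0$. Combined with Remark~\ref{rem:layer_output} and Theorem~\ref{thm:he}, which together ensure that every canalizing variable of $f$ belongs to the first layer (since all canalizing variables share the same canalized output), the phrase ``canalizing in the first layer'' adds no content beyond ``canalizing,'' so I only need to give an algebraic characterization of canalizing variables.

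For part (1), the forward direction assumes $x_i$ is canalizing with input $0$ and canalized output $b$. Then $f(x_1,\ldots,x_{i-1},0,x_{i+1},\ldots,x_n) \equiv b$ as a function of the remaining variables, and specializing those variables to zero yields $b = f(0,\ldots,0)$. Hence $\bigl(f - f(0,\ldots,0)\bigr)\big|_{x_i=0} \equiv 0$, and the algebraic fact above gives $f - f(0,\ldots,0) = x_i \cdot g$. The reverse direction is immediate: if $f - f(0,\ldots,0) = x_i \cdot g$, then setting $x_i = 0$ forces $f|_{x_i=0} \equiv f(0,\ldots,0)$, which is exactly the statement that $x_i = 0$ canalizes $f$ to the constant $f(0,\ldots,0)$.

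For part (2), I would reduce to part (1) via the affine substitution $x_i \mapsto x_i + 1$. Define $\tilde f(x_1,\ldots,x_n) := f(x_1,\ldots,x_{i-1},x_i+1,x_{i+1},\ldots,x_n)$. One checks immediately that $x_i$ is canalizing in $f$ with input $1$ if and only if $x_i$ is canalizing in $\tilde f$ with input $0$, and that $\tilde f(0,\ldots,0) = f(0,\ldots,1,\ldots,0)$ with the $1$ in the $i$-th slot. Applying part (1) to $\tilde f$ yields $\tilde f - \tilde f(0,\ldots,0) = x_i \cdot g$, which is precisely the claimed divisibility.

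The only mildly technical ingredient is the unique Boolean-polynomial decomposition $p = p|_{x_i=0} + x_i\cdot q$, which follows directly from the Boolean identity $x_i^2 = x_i$: every square-free monomial either contains $x_i$ as a factor or does not involve $x_i$ at all, so collecting terms gives the decomposition, and uniqueness then implies that $x_i \mid p$ iff the $x_i$-free part $p|_{x_i=0}$ vanishes. I do not foresee a real obstacle, since Corollary~\ref{cor:factor} is essentially a repackaging of this elementary divisibility fact together with the change of variable $x_i \mapsto x_i+1$ needed to handle canalizing input~$1$.
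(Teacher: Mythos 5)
Your proof is correct and takes essentially the same route as the paper's, which simply asserts that part~(1) follows from Theorem~\ref{thm:he} and that part~(2) reduces to it via the substitution $x_i \mapsto x_i+1$ together with the observation that the constant term of the substituted polynomial is $f(0,\ldots,1,\ldots,0)$. You merely supply the details the paper leaves implicit --- the decomposition $p = p|_{x_i=0} + x_i\cdot q$ and the identification of the canalized output with $f(0,\ldots,0)$ --- and the one caveat (shared with the paper's own one-line proof, and really an artifact of the statement) is the degenerate case of constant $f$, where the factorization holds trivially yet no variable is canalizing because Definition~\ref{def:canalizing} requires $g\not\equiv b$.
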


\begin{proof}
Part 1 follows directly from Theorem~\ref{thm:he}. Part 2 also follows from Theorem~\ref{thm:he} by noticing that the constant term of $f(x_1,\ldots,x_i+1,\ldots,x_n)$ is obtained by setting $x_i=1$ and $x_j=0$ for $j\ne i$, so it is given by 
$f(0,\ldots, 0,x_i=1,0,\ldots,0)$.
\end{proof}

\begin{example}
Let $f=x_1x_2+x_1+1\in \mathbb{F}_2[x_1,x_2]$. Then $f(x_1,x_2)-f(0,0)=x_1x_2+x_1+1-1=x_1x_2+x_1$ and since it can be factored as $x_1(x_2+1)$, we see that $x_1$ is a 
 canalizing variable of $f$ with canalizing input 0. Similarly, $f(x_1,x_2+1)-f(0,1)=x_1x_2$, so we have that $x_2$ is another canalizing variable of $f$ with canalizing input 1.
\end{example}


Corollary~\ref{cor:factor} allows for quick identification of the canalizing variables in the outermost layer of a function and we will use it as we ``reverse engineer'' an NCF from partial information about its canalizing structure. To illustrate the process we will consider the regulation of the mammalian cell cycle and specifically how the members of the E2F family of transcription factors are regulated by other elements. This example is based on the work in~\cite{faure06} in which the mammalian cell cycle is modeled using a logical framework.


Mammalian cell division is tightly controlled and coordinated with the overall growth of the organism. The protein Rb forms a complex with members of the E2F family of transcription factors, turning them from transcriptional activators to repressors. As a result, Rb is considered an inhibitor of E2F and so is CycB. CycA plays a more complicated role in the regulation of E2F. While generally an inhibitor of E2F, CycA's effect on E2F is known to be altered by p27 but we will assume we do not know how. 

Thus the layer structure of the polynomial that models the dynamics of E2F will be an NCF in variables $x_1=$ CycB, $x_2=$ Rb, $x_3=$ p27, and $x_4=$ CycA distributed across two layers as follows: \emph{Layer 1} has canalizing output 0 and contains $x_1$ with canalizing input 1 and $x_2$ also with canalizing input 1;
\emph{Layer 2} has unknown canalizing output and contains $x_3$ with unknown canalizing input and $x_4$ with canalizing input 0. The goal is to find \emph{all} NCFs with this prescribed canalizing structure.


First, we will use Theorem 3.3 in~\cite{jarrah2007nested} which provides relationships among the coefficients in the NCF. (For ease of reading, we write below, for example, $c_{123}$ instead of $c_{\{1,2,3\}}$.) Notice that not all the equations that follow from the theorem carry useful information. Here we present only those that are not trivially true. 
\begin{eqnarray}
c_{1234} &=& 1\label{eq:first}\\
c_2&=&c_{12}c_{234}\\
c_3&=&c_{123}c_{234}c_{134}\\
c_4&=&c_{234}c_{134}c_{124}\\
c_{13}&=&c_{123}c_{134}\\
c_{14}&=&c_{134}c_{124}\\
c_{23}&=&c_{123}c_{234}\\
c_{24}&=&c_{234}c_{124}\\
c_{34}&=&c_{234}c_{134}
\end{eqnarray}
Second, we use Corollary 3.6 in~\cite{jarrah2007nested} and the fact that variables in the same layer have the same canalizing output, while those in adjacent layers have different canalizing output to ensure that $x_1$ and $x_2$ are in one layer and $x_3$ and $x_4$ and in a different layer. The corollary produces the following nontrivial equations:

\begin{eqnarray}
c_1 &=& c_{12}c_{134}\\
c_{12}&=&c_{123}c_{124}+1\label{eq:last}
\end{eqnarray}
We solve the system of eleven equations 
(\ref{eq:first})-(\ref{eq:last}) using Gr\"obner bases with lexicographic order to find that the system has five free parameters, $c_{134}=a,c_{234}=b,c_{124}=c,c_{123}=d$, and $c_{\emptyset}=e$. The rest of the coefficients can be expressed in terms of these free parameters as $c_1=acd+a, c_2=bcd+b, c_3=abd, c_4=abc, c_{12}=cd+1, c_{13}=ad, c_{14}=ac, c_{23}=bd, c_{24}=bc$, and $c_{34}=ab$. Thus the family of functions that satisfy the prescribed canalizing structure is of the form
\[
\begin{array}{l}
f=x_1x_2x_3x_4 + (acd + a)x_1 + (bcd + b)x_2 + abdx_3 \\
+ abcx_4 + (cd + 1)x_1x_2 + adx_1x_3 + acx_1x_4 + bdx_2x_3 + \\
 bcx_2x_4 + abx_3x_4 + ax_1x_3x_4 + bx_2x_3x_4 + cx_1x_2x_4 + \\
 dx_1x_2x_3 + e.
\end{array}
\]
So far, we have ensured that the above family of functions are nested canalizing in canalizing order $x_1,x_2,x_3,x_4$ and that $x_1$ and $x_2$ are in one layer and $x_3$ and $x_4$ are in a different layer. We have not guaranteed $x_1$ and $x_2$ will be in the first layer or that the variables will have the required canalizing input and corresponding output. 

To ensure that variables $x_1$ and $x_2$ will be in the first layer and both have canalizing input~1, we will use Corollary~\ref{cor:factor} and consider $f(x_1+1,x_2+1,x_3,x_4)-f(1,1,0,0)$, which expanded is
$$
x_1 + a x_1 + c d x_1 + a c d x_1 + x_2 + b x_2 + c d x_2 + b c d x_2 + 
$$
$$
 x_1 x_2 + c d x_1 x_2 + d x_3 + a d x_3 + b d x_3 + a b d x_3 + d x_1 x_3 + 
 $$
 $$
 a d x_1 x_3 + d x_2 x_3 + b d x_2 x_3 + d x_1 x_2 x_3 + c x_4 + a c x_4 + 
 $$
$$
 b c x_4 + a b c x_4 + c x_1 x_4 + a c x_1 x_4 + c x_2 x_4 + b c x_2 x_4 + 
 $$
 $$
 c x_1 x_2 x_4 + x_3 x_4 + a x_3 x_4 + b x_3 x_4 + a b x_3 x_4 + x_1 x_3 x_4 + 
 $$
$$
 a x_1 x_3 x_4 + x_2 x_3 x_4 + b x_2 x_3 x_4 + x_1 x_2 x_3 x_4.
$$
In order to be able to factor out $x_1$ and $x_2$ from all nonconstant terms, we need to select zero coefficients for all terms that do not have both $x_1$ and $x_2$, that is, set to zero the coefficients for the terms $x_1, x_2, x_3, x_4, x_1x_3, x_2x_3$, etc. This generates a system of 11 nonlinear equations in variables $a,b,c,d$ over $\mathbb{F}_2$ which has (lexicographic) Gr\"obner basis $\{a+1, b+1\}$. So we set $a=b=1$ in~$f$ to obtain
$$
f'=e + x_1 + c d x_1 + x_2 + c d x_2 + x_1 x_2 + c d x_1 x_2 + 
$$
$$d x_3 + d x_1 x_3 + 
 d x_2 x_3 +  d x_1 x_2 x_3 +
 $$
$$
c x_4 +c x_1 x_4 +c x_2 x_4 +c x_1 x_2 x_4 + x_3$$
$$
x_4 + x_1 x_3 x_4 + x_2 x_3 x_4 + x_1 x_2 x_3 x_4.
$$
Now notice that when we plug $x_1$'s or $x_2$'s canalizing value into $f'$, we get $f'(1,x_2,x_3,x_4)=f'(x_1,1,x_3,x_4)=1+cd+e$. Since we want the output to be~0, we set $e=cd+1$. Furthermore, since we have already guaranteed that $x_1$ and $x_2$ will be in the first layer with canalizing input 1, we can now set them equal to their noncanalizing input, 0, and proceed with the next layer, working with the much simpler function $f'(0,0,x_3,x_4)=1 + c d + d x_3 + c x_4 + x_3 x_4$.

We now want to make sure that $x_4$ has canalizing input~0. (Recall that we made sure that $x_3$ and $x_4$ are in a different layer by using Corollary 3.6 in~\cite{jarrah2007nested}.) Recalling Corollary~\ref{cor:factor}, we need to be able to factor out $x_4$ from all nonconstant terms of $f'(0,0,x_3,x_4)$ which requires us to set $d=0$. The resulting function is $1 + c x_4 + x_3 x_4$. 
We can quickly check that, indeed, when $x_4$ takes on its canalizing input of~0, the output is a constant, namely~1. Notice that although we did not explicitly prescribe the canalizing output of $x_4$, we made sure it is different from the canalizing output of Layer 1 or else $x_4$ would not be in a different layer from $x_1$ and $x_2$. The function $1 + c x_4 + x_3 x_4$ depends on a single parameter $c$ whose value decides the canalizing input of $x_3$ (which we kept undetermined). When $x_3=0$, the expression $1 + c x_4 + x_3 x_4$ evaluates to $1+cx_4$, meaning that if the canalizing input of $x_3$ is 0, then we must set $c=0$; if $x_3=1$, the function evaluates to $1+(c+1)x_4$, \textit{i.e.}, if the canalizing input of $x_3$ is 1, then we must set $c=1$. Since the canalizing input of $x_3$ is to be left undetermined, we will keep $c$ as a parameter and conclude that the one-parameter family of nested canalizing functions with the required canalizing structure is the set
\[
\begin{array}{l}
\{1 + x_1 + x_2 + x_1 x_2 + c x_4 + c x_1 x_4 + c x_2 x_4 + c x_1 x_2 x_4 + \\
x_3 x_4 + x_1 x_3 x_4 + x_2 x_3 x_4 + x_1 x_2 x_3 x_4 ~\vert~ c\in\mathbb{F}_2\}.
\end{array}
\]
This family of NCFs is consistent with the model proposed in~\cite{faure06}. With the same variable naming convention, their Boolean equation for E2F is 
$
(\overline{x_2}\land \overline{x_4}\land \overline{x_1})\lor(x_3\land \overline{x_2}\land \overline{x_1}),
$
which written as a polynomial over $\mathbb F_2$ is 
$$
\begin{array}{l}
1 + x_1 + x_2 + x_1 x_2 + x_4 + x_1 x_4 + x_2 x_4 + x_1 x_2 x_4 +  \\
x_3 x_4 + x_1 x_3 x_4 + x_2 x_3 x_4 + x_1 x_2 x_3 x_4,
\end{array}
$$
matching the form of our family of NCFs when $c=1$. Recall that for the sake of our example we chose to assume that we did not know the role of p27 in altering the regulatory effect of CycA on E2F. Suppose that we now learn that, when present, p27 does not allow the inhibition of E2F by CycA. Equipped with this information we set $c=1$ so that when both CycA and p27 are present, E2F remains present, arriving to the same model for E2F as the one proposed in~\cite{faure06}.

\section{Conclusion}
\label{sec:discussions}



This work focuses on the canalization structure of Boolean functions, specifically through studying the layer structure of a function. Knowing the layer structure of a Boolean function may aid researchers in multiple ways. It may be used to inform which experiments to prioritize when the goal is the identification of the correct update rule. The hierarchical layer structure may further reveal the level of importance of specific input variables, guiding experimentalists on which aspects of genetic control to focus.

In this paper we developed and implemented two algorithms for finding the layer structure of a general Boolean function, and proposed a third algorithm for the special case of nested canalizing functions.
The implementation in Python requires the truth table format of the function while the Macaulay2 version requires a polynomial version of the function. We showed that the use of the polynomial version is advantageous when the function exhibits canalization; if the function is noncanalizing, the run time of both algorithms increases very similarly when the number of variables increases, at an exponential speed. All code is available at the GitHub repository 
\href{https://github.com/ckadelka/BooleanCanalization}{https://github.com/ckadelka/BooleanCanalization}.
We demonstrated two uses of canalizing layers. In one application given the layer structure of a nested canalizing function, we showed how to find its disjunctive normal form, an \textbf{NP}-hard problem in the general case.  In a second application we showed how to recover the complete family of functions that satisfy given partial layer information.

Improving the space- and time-efficiency of the algorithms for special classes of functions that take advantage of their structure (such as symmetric functions) or developing multi-state implementations may be conducive for increasing the scope of applications.

\bibliographystyle{plain}
\bibliography{ref_for_layers}




\end{document}